\newtheoremstyle{theorem}
  {10pt}		  
  {10pt}  
  {\sl}  
  {}     
  {\bf}  
  {. }    
  { }    
  {}     
\theoremstyle{theorem}
\newtheorem{thm}{Theorem}
\newtheorem{lem}{Lemma}
\newtheorem{conj}{Conjecture}
\newtheorem{rmk}{Remark}
\newtheoremstyle{defi}
  {10pt}		  
  {10pt}  
  {\rm}  
  {}     
  {\bf}  
  {. }    
  { }    
  {}     
\theoremstyle{defi}
\newcommand{\R}{\operatorname{\mathbf{R}}}
\newcommand{\ddef}{\stackrel{\text{def}}{=}}
\newcommand{\sgn}{\mathsf{sgn}}
\newcommand{\dd}{\mathrm{d}}
\newcommand{\A}{\mathcal{A}}
\begin{document}

\title{Towards A Universal Gibbs Constant}

\author{Santanu Antu$^1$ AND John Cullinan$^2$\\
$^1$Department of Mathematics, Bard College, Annandale-On-Hudson, NY 12401\\[2pt]
$^2$Department of Mathematics, Bard College, Annandale-On-Hudson, NY 12401
}

\maketitle

\begin{abstract}
In this paper we build on the work of \cite{kaber} where it was shown that the one-parameter family of Gegenbauer Polynomials (GP) exhibit a Gibbs Phenomenon at a jump discontinuity.  We show that the one-parameter family of Generalized Laguerre Polynomials (GLP) also exhibit a Gibbs Phenomenon.  Among many differences, a major one is that the GLP are orthogonal on a non-compact subset of $\R$, while the GP are orthogonal on $[-1,1]$.  Our strategy follows that of \cite{kaber} and we use entirely elementary methods to arrive at our result.  As a special case we show that the Hermite Polynomials also possess a Gibbs Phenomenon.  We conclude with a numerical example exhibiting the rate of convergence to the Gibbs constant and a conjectured identity for special values of the GLP.

{\bf AMS Subject Classification: 65D10} 

{\bf Key Words and Phrases: Orthogonal Polynomials, Gibbs Phenomenon}
\end{abstract}

\section{Introduction}

\subsection{Background and Setup} The Gibbs Phenomenon is a well known consequence of approximating a jump-discontinuous function $f$ by a Fourier series.  Near the discontinuity (say at $x_0$), if the function jumps by height $h$ (\emph{i.e.}~$\lim_{\epsilon_1,\epsilon_2 \to 0} |f(x_0 + \epsilon_1) - f(x_0-\epsilon_2)| = h$, then the Fourier series $\mathcal{F}(f)$ of $f$ over-and-undershoots the limit by $\approx 18\%$ of $h$.  The Gibbs constant $\gamma$ is the value of the sinc integral
\[
\gamma \ddef \frac{2}{\pi} \int_0^\pi \frac{\sin t}{t}\,{\rm d}t  = 1.179\dots
\]
and is the exact proportion of the over/undershoot.

In \cite{kaber}, the author shows that a certain one-parameter family of orthogonal polynomials exhibits the same Gibbs phenomenon with the identical constant.  That is, if one expands a function near a jump discontinuity using this family of orthogonal polynomials (in contrast to a classical Fourier series), then it overshoots by exactly the same proportion $\gamma$.  

The aim of this paper is to generalize the results of \cite{kaber} to other families of orthogonal polynomials, notably ones orthogonal on a non-compact subset of the real line.  In particular, we focus on the Generalized Laguerre Polynomials (GLP) and, by specialization, the Hermite Polynomials.  The GLP are a one-parameter family of polynomials orthogonal on $\R_{\geq} \ddef [0,\infty)$; they share many of the salient features of the polynomials studied in \cite{kaber}, but differ in key ways as well.

Saving a more detailed exposition of the previous results for Section \ref{background}, we are content to give a brief recap of the main results of \cite{kaber}. In that paper, the author considers the family of Gegenbauer Polynomials\footnote{In \cite{kaber}, the author refers to the $C_n^{(\lambda)}$ as {Jacobi polynomials} but acknowledges the ambiguity in doing so. We reserve the term ``Jacobi Polynomial" for the usual two-parameter family.}  $\lbrace C_n^{(\lambda)} \rbrace_{n=0}^\infty$, orthogonal on $[-1,1]$, for all $\lambda > -1/2$.  The  Gegenbauer Polynomials (GP) are a one-parameter subfamily of the two-parameter family $\lbrace P_{n}^{(\alpha,\beta)} \rbrace_{n=0}^\infty$ of Jacobi Polynomials.  For various values of $\lambda$, the GP encompass important classes of orthogonal polynomals, most notably the Legendre Polynomials and the Chebychev Polynomials of the first and second kind.  

Among the many beautiful results of \cite{kaber}, we highlight the two main ones here.  First, the author shows that, for all $\lambda > -1/2$  the Gibbs phenomenon holds for the $\sgn$ function on $[-1,1]$; that is, the expansion of $\sgn(x)$ in the GP converges to $\gamma$ as $x \to 0$.  Second, the author considers the \emph{steepness} of approximation in the GP basis: how the derivative at $x=0$ of the approximation varies with $\lambda$.  They show that, for fixed degree $n$, the steepness increases as $\lambda$ increases.  In terms of approximation, this shows that for fixed degree $n$, larger values of $\lambda$ give better approximations to $\gamma$ closer to $x=0$.

Turning to the business of this paper, we use the same approach to study the GLP, defined for $\alpha > -1$ by
\[
L_n^{(\alpha)}(x) = \sum_{j=0}^n \binom{n+\alpha}{n-j} \frac{(-x)^j}{j!}.
\]
The $\lbrace L_n^{(\alpha)}\rbrace_{n=0}^\infty$ are orthogonal with respect to the weight function $w^{(\alpha)}:\R_{\geq} \to \R_{\geq}$ defined by 
\[
w^{(\alpha)}(x) = x^\alpha e^{-x}.
\]
As remarked above, the non-compactness of the interval of orthogonality of the GLP plays a role in the analysis. 

In general, if $\lbrace p_j \rbrace_{j=0}^\infty$ is a family of orthogonal polynomials on an interval (say $[-1,1]$), let $\mathcal{S}_n$ be the $n$th approximation of (\emph{e.g.})~the $\sgn$ function in the $p_j$:
\[
\mathcal{S}_n(x) = \sum_{j=0}^n \hat{s}_j\, p_j(x).
\]
Let $x_{n,+}$ be the smallest positive critical point of $\mathcal{S}_n$ and $x_{n,-}$ the largest negative critical point of $\mathcal{S}_n$.  By a ``Gibbs Phenomenon" for  $\lbrace p_j \rbrace_{j=0}^\infty$ we mean that 
\[
\lim_{n \to \infty} S_n(x_{n,+})  - S_n(x_{n,-}) = \gamma.
\]
This necessitates an analysis of the critical points of the expansion.

In \cite{kaber}, the author shows that the critical points of the Gegenbauer approximation are the exactly the zeroes of a different Gegenbauer polynomial.  The zeroes of every Gegenbauer polynomial lie in the interval $[-1,1]$, and their distribution is such that as the degree of the polynomial increases, the smallest positive zero approaches 0.   This is an oversimplification of the delicate argument of \cite[\S3.3]{kaber}, but gives a rough justification that the \emph{shape} of the approximation ought to resemble the classical Fourier approximation of the $\sgn$ function.  That the \emph{value} of the approximation at the critical point approaches $\gamma$ is an entirely separate argument.  We use the same approach in the current paper to prove the following theorem.

\begin{thm} \label{mainthm}
Let $S:[0,\infty) \to \R$ be defined by $S(x) = 0$ for $0 \leq x  < 1$ and $S(x) = 1$ for $x \geq 1$.  Let 
\[
\Pi_n^{(\alpha)}(S)(x) = \sum_{j=0}^n \hat{s}_j^{(\alpha)}\, L_j^{(\alpha)}(x)
\]
be the expansion of $S$ in the Generalized Laguerre Polynomials $\lbrace L_j^{(\alpha)} \rbrace_{j=0}^\infty$ for $\alpha > -1$.  Let $x_{n,+}^{(\alpha)}$ be the smallest critical point of $\Pi_n^{(\alpha)}(S)$ that is strictly greater than $1$ and $x_{n,-}^{(\alpha)}$ the largest critical point strictly smaller than $1$. Then
\[
\lim_{n \to \infty} \Pi_n^{(\alpha)}(S)(x_{n,+}^{(\alpha)})  - \Pi_n^{(\alpha)}(S)(x_{n,-}^{(\alpha)}) = \frac{2}{\pi} \int_0^\pi \frac{\sin t}{t} \, \dd t.
\]
\end{thm}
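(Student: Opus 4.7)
The plan is to follow the four-step blueprint of \cite{kaber} in the GLP setting: (i) compute the Laguerre coefficients $\hat{s}_j^{(\alpha)}$ in closed form, (ii) differentiate $\Pi_n^{(\alpha)}(S)$ termwise and collapse the resulting sum by a Christoffel--Darboux identity, (iii) locate the critical points flanking $x=1$, and (iv) extract the sinc integral by a Riemann-sum argument on the correct scale.

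For (i), we write $\hat{s}_j^{(\alpha)}=\|L_j^{(\alpha)}\|^{-2}\int_1^\infty L_j^{(\alpha)}(x)\,x^\alpha e^{-x}\,\dd x$ and apply the Rodrigues-style identity
\[
\frac{\dd}{\dd x}\bigl[x^{\alpha+1}e^{-x}L_{j-1}^{(\alpha+1)}(x)\bigr]=j\,x^\alpha e^{-x}L_j^{(\alpha)}(x) \qquad (j\ge 1),
\]
so that the integral becomes a boundary term expressing $\hat{s}_j^{(\alpha)}$ as an explicit scalar multiple of $L_{j-1}^{(\alpha+1)}(1)$. For (ii), the derivative rule $\frac{\dd}{\dd x}L_j^{(\alpha)}=-L_{j-1}^{(\alpha+1)}$ combined with (i) shows that $(\Pi_n^{(\alpha)}(S))'(x)$ is a constant multiple of the partial reproducing kernel $\sum_{k=0}^{n-1} L_k^{(\alpha+1)}(1)L_k^{(\alpha+1)}(x)/\|L_k^{(\alpha+1)}\|^2$ for the GLP with shifted parameter $\alpha+1$ at $(1,x)$. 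The Christoffel--Darboux identity collapses this to a constant multiple of
\[
\frac{L_n^{(\alpha+1)}(x)L_{n-1}^{(\alpha+1)}(1)-L_n^{(\alpha+1)}(1)L_{n-1}^{(\alpha+1)}(x)}{1-x},
\]
whose zeros (apart from the removable point $x=1$) are exactly the critical points of $\Pi_n^{(\alpha)}(S)$.

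For (iii), since the zeros of $L_n^{(\alpha+1)}$ and $L_{n-1}^{(\alpha+1)}$ are real, positive, and interlace, and since at any fixed interior point the bulk density of GLP zeros is on the order of $\sqrt{n}$, the zeros of the bracketed polynomial cluster around $x=1$ with gaps of size $O(1/\sqrt{n})$, forcing $x_{n,\pm}^{(\alpha)}\to 1$. For (iv), rescale by $x=1+t/\sqrt{n}$: the oscillatory (Hilb-type) asymptotics for $L_n^{(\alpha+1)}$ at a fixed positive point produce cosines whose phases differ by $O(1/\sqrt{n})$ between the indices $n$ and $n-1$, so that cancellation in the Christoffel--Darboux bracket leaves a sine kernel behaving like $\sin(t)/t$ on this scale. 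Integrating $(\Pi_n^{(\alpha)}(S))'$ from $x_{n,-}^{(\alpha)}$ to $x_{n,+}^{(\alpha)}$ then becomes, in the limit, a Riemann sum for $\tfrac{2}{\pi}\int_0^\pi \frac{\sin t}{t}\,\dd t$.

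The principal obstacle is step (iv). Unlike the Gegenbauer setting, the GLP live on a non-compact interval and their zeros at a fixed positive point scale as $1/\sqrt{n}$ rather than the Fourier-type $1/n$. One must therefore establish the oscillatory asymptotics of $L_n^{(\alpha+1)}$ uniformly on an $O(1/\sqrt{n})$-neighborhood of $x=1$, and, crucially, calibrate the phase so that the rescaled interval $(x_{n,-}^{(\alpha)},x_{n,+}^{(\alpha)})$ maps precisely onto $(0,\pi)$. Without this precise matching one still obtains some sinc-type integral, but only the correct window yields the universal constant $\gamma$.
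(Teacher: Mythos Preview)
Your outline is essentially the paper's own proof: steps (i)--(ii) match exactly (Sturm--Liouville/Rodrigues to compute $\hat{s}_j^{(\alpha)}$, derivative rule $\tfrac{\dd}{\dd x}L_j^{(\alpha)}=-L_{j-1}^{(\alpha+1)}$, then Christoffel--Darboux to collapse the kernel), and your step (iv) via Hilb-type asymptotics and the rescaling $x=1+t/\sqrt{n}$ is the same as the paper's substitution $u=2\sqrt{n}(\sqrt{x}-1)$ at leading order. The only slip is that the rescaled window $(x_{n,-}^{(\alpha)},x_{n,+}^{(\alpha)})$ maps to $(-\pi,\pi)$, not $(0,\pi)$; the paper then uses evenness of $\tfrac{\sin u}{u}$ to rewrite $\tfrac{1}{\pi}\int_{-\pi}^{\pi}$ as $\tfrac{2}{\pi}\int_0^{\pi}$, and the bulk of its work is precisely the explicit trigonometric calculus you summarize as ``cancellation in the Christoffel--Darboux bracket.''
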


The Hermite polynomials (see Section \ref{background} below) are special cases of the GLP, obtained by setting $\alpha = \pm 1/2$ and evaluating at $x^2$; as such they are orthogonal on $\R$ rather than just $\R_{\geq}$.  Because of this difference we treat them separately from the GLP and show that they have a Gibbs Phenomenon as well.  We note that the Hermite Polynomials are even and odd functions in even and odd degrees, respectively, hence we can deduce a Gibbs Phenomenon by simply looking at one of the critical points (we choose $x_{n,+}$).

\begin{thm} \label{hermthm}
Let $\lbrace H_j \rbrace_{j=0}^\infty$ be the family of Hermite polynomials, let 
\[
\Pi_n(\sgn)(x) = \sum_{j=0}^n \hat{s}_j H_j(x)
\]
be the expansion of the $\sgn$ function, and let $x_{n,+}$ be the smallest positive critical point of $\Pi_n(\sgn)$.  Then
\[
\lim_{n \to \infty} \Pi_n(\sgn)(x_{n,+}) = \frac{2}{\pi} \int_0^\pi \frac{\sin t}{t} \, \dd t.
\]
\end{thm}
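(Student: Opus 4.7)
The plan is to follow the Gegenbauer-style analysis of \cite{kaber} and Theorem \ref{mainthm}, while exploiting the parity structure that collapses the problem to a single critical point. Since $\sgn$ is odd and $H_{2k}$ is even, only odd-index Hermite polynomials contribute to the expansion. Using the classical identity
\[
H_{2k+1}(x) = (-1)^k\, 2^{2k+1}\, k!\, x\, L_k^{(1/2)}(x^2),
\]
the partial sum factors as $\Pi_n(\sgn)(x) = x\, Q_m(x^2)$ with $m = \lfloor (n-1)/2\rfloor$, where $Q_m$ is a polynomial in $y = x^2$ built from GLP of parameter $\alpha = 1/2$. Consequently $\Pi_n(\sgn)$ is itself odd, and it suffices to analyze the critical point $x_{n,+}$.

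First I would compute the coefficients $\hat{s}_{2k+1}$ in closed form. By orthogonality and the oddness of $\sgn$, $\hat{s}_{2k+1}$ is proportional to $\int_0^\infty H_{2k+1}(x) e^{-x^2}\dd x$, which evaluates to a simple factorial expression via Rodrigues' formula. With the coefficients in hand, I would locate the smallest positive critical point by differentiating $x\, Q_m(x^2)$ and applying a Christoffel--Darboux-type contraction; this should express the critical points (other than possibly $x = 0$) as the zeros of a single Hermite polynomial, at which point the known zero-distribution results force $x_{n,+} \to 0$ as $n \to \infty$. This step closely mirrors \cite[\S3.3]{kaber} and the corresponding argument supporting Theorem \ref{mainthm}.

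The main obstacle will be the final step: substituting the explicit coefficients and the asymptotic location of $x_{n,+}$ into $\Pi_n(\sgn)(x_{n,+})$ and identifying the limit as $\gamma = \frac{2}{\pi}\int_0^\pi\frac{\sin t}{t}\dd t$. The strategy is to rescale the summation index by a factor capturing the shrinkage rate of $x_{n,+}$, invoke a Mehler--Heine or Plancherel--Rotach type asymptotic for Hermite polynomials near the edge of the spectrum, and recognize the resulting expression as a Riemann sum for the sinc integral. Balancing the rapid growth of $H_{2k+1}(x_{n,+})$ against the decay of $\hat{s}_{2k+1}$ to extract such a convergent Riemann sum is the delicate heart of the argument, and the point where the noncompactness of $\R$ (versus the compact interval in \cite{kaber}) demands the most care.
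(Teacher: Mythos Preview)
Your outline matches the paper through the midpoint: only odd-index Hermite terms survive, the derivative of the partial sum collapses via Christoffel--Darboux to a constant times $H_{2N+1}(x)/x$, and hence the smallest positive critical point $x_{N,+}$ is the least positive zero of $H_{2N+1}$, which Szeg\H{o}'s estimates place at $\sim \pi/\sqrt{4N+3}$.

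Where you diverge is the endgame. You propose to return to the original sum $\sum_j \hat{s}_j H_j(x_{N,+})$, insert termwise asymptotics, and recognise a Riemann sum for $\tfrac{2}{\pi}\int_0^\pi \tfrac{\sin t}{t}\,\dd t$. The paper instead exploits the collapsed derivative a second time: since $\Pi_{2N+1}'(x)=c_N\,H_{2N+1}(x)/x$ and $\Pi_{2N+1}(0)=0$, one has the single-term integral representation
\[
\Pi_{2N+1}(x_{N,+})=c_N\int_0^{x_{N,+}}\frac{H_{2N+1}(y)}{y}\,\dd y,
\]
applies the oscillatory asymptotic $e^{-y^2/2}H_{2N+1}(y)\sim (-1)^N\pi^{-1/2}2^{2N+1}N!\sin(y\sqrt{4N+2})$ once to the integrand, and the substitution $u=y\sqrt{4N+2}$ delivers the sinc integral directly.

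Your Riemann-sum route is not wrong in principle, but it is genuinely harder: you must control the asymptotics of $H_{2j+1}(x_{N,+})$ \emph{uniformly} in $j$ over $0\le j\le N$, with the argument $x_{N,+}\sqrt{4j+2}$ sweeping the whole interval $[0,\pi]$. Neither Mehler--Heine (which is calibrated for arguments $\sim c/n$) nor the fixed-$x$ Plancherel--Rotach formula applies uniformly across that range without additional work, and the phrase ``near the edge of the spectrum'' is misplaced here---$x_{N,+}\to 0$ is deep in the oscillatory bulk, not the edge $x\sim\sqrt{2n}$. The paper's integral representation sidesteps all of this by reducing to a single large-$n$ asymptotic applied to one polynomial.
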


We expect that a more general result is true: namely that the two parameter family (in $(\alpha,\beta)$ -- see Section \ref{background} below) of Jacobi Polynomials (JP) exhibits a Gibbs phenomenon. Similar to the GP, the JP are orthogonal on $[-1,1]$, however they are not even and odd functions in even and odd degree, respectively, making for a different analysis than in \cite{kaber}.  Indeed, one of the places in the current paper where we differ significantly is in the analysis of the Christoffel-Darboux sum (compare \cite[Prop.~1]{kaber} with (\ref{der_form}) below).  

Furthermore, we have the following relationship between the JP and GLP in which the latter can be viewed as a specialization ``at infinity'' of the former \cite[(5.3.4)]{szego}:
\begin{align} \label{lag_inf}
L_n^{(\alpha)}(x) = \lim_{\beta \to \infty} P_n^{(\alpha,\beta)}(1-2x/\beta).
\end{align}
If one could prove that the JP have a Gibbs Phenomenon for all finite specializations of $\alpha$ and $\beta$, and then use (\ref{lag_inf}) to deduce the Gibbs Phenomenon for the GLP, this would give evidence for what we term the \emph{universal Gibbs constant} $\gamma$.  There will be always be more general polynomials (\emph{e.g.}~the Hahn Polynomials yield the JP under suitable specialization in the Askey scheme) for working out special cases, but it would be interesting to determine necessary and sufficient conditions for an arbitrary family of orthogonal polynomials to possess a Gibbs Phenomenon.  Our intent in this paper is to use elementary methods to give evidence for such a ``universal Gibbs Phenomenon'' for orthogonal polynomials.

\section{Orthogonal Polynomials and Series Expansions} \label{background}

\subsection{Classical Orthogonal Polynomials} Here we give a brief recap of some of the main examples of orthogonal polynomials on subsets of the real line.   The \textbf{Jacobi polynomials} $\lbrace P_n^{(\alpha,\beta)}(x)\rbrace_{n=0}^\infty$ are a two-parameter family of polynomials on the interval $[-1,1]$, orthogonal with respect to the weight function $w^{(\alpha,\beta)}(x) = (1-x)^\alpha(1+x)^\beta$,  and defined explicitly by 
\[
P_n^{(\alpha,\beta)}(x) = \sum_{j=0}^n \binom{n+a}{n-j} \, \binom{n+\beta}{j} \, \left( \frac{x-1}{2}\right)^j \, \left( \frac{x+1}{2} \right)^{n-j}.
\]
Many classical families of orthogonal polynomials can be viewed as special cases of Jacobi polynomials, such as
\begin{itemize}
\item the \textbf{Gegenbauer polynomials} $C_n^{(\lambda)}(x) = P_n^{(\lambda-1/2,\lambda-1/2)}(x)$,
\item the \textbf{Legendre polynomials} $P_n(x) = P_{n}^{(0,0)}(x)$,
\item the \textbf{Chebyshev polynomials} $T_n(x) = P_n^{(-1/2\, ,\,-1/2)}(x)$ of the first kind and $U_n(x) = P_n^{(1/2,1/2)}(x)$ of the second kind, and
\item the \textbf{Generalized Laguerre polynomials} 
\[
L_n^{(\alpha)}(x) = \lim_{\beta \to \infty} P_n^{(\alpha,\beta)}(1-2x/\beta).
\]
\end{itemize}
While the first three families are orthogonal on $[-1,1]$ as well, the $L_n^{(\alpha)}(x)$ are orthogonal on the non-negative real numbers $[0,\infty)$.   The \textbf{Hermite polynomials} 
\begin{align*}
H_{2n}(x) &=(-1)^n\,2^{2n}\,n!\,L_n^{(-1/2)}(x^2) \\
H_{2n+1}(x) &=(-1)^n\,2^{2n+1}\,n!\,x\,L_n^{(1/2)}(x^2),
\end{align*}
which we will work with in detail below, are specializations of the Generalized Laguerre polynomials and are orthogonal on $(-\infty,\infty)$.

\subsection{Overview of Kaber's Proof}

Even though the main result in \cite{kaber} holds for the family Gegenbauer polynomials (and therefore all of its specializations, such as the Legendre polynomials), the approach would work more generally in the following context.

Let $\mathcal{P} = \lbrace p_n(x) \rbrace_{n=0}^{\infty}$ be a family of polynomials, orthogonal with respect to a weighted inner product
\[
\langle \ , \ \rangle: \mathcal{P} \times \mathcal{P} \to \R
\]
on a subset $S$ of the real line.  Let $f: S \to \R$ be a bounded real-valued function on $S$.  We consider the series expansion $\mathcal{F}_n$ of $f$ in the polynomials $p_k$:
\[
\mathcal{F}_n(x) = \sum_{k=0}^n \hat{f}_k\, p_k(x),
\]
where $\hat{f}_k$ is the orthogonal projection of $f$ onto $p_k$.  We set the pointwise limit $\mathcal{F}(x) = \lim_n \, \mathcal{F}_n(x)$ when it exists.  

In \cite{kaber}, the author takes advantage of several properties enjoyed by the Gegenbauer polynomials (but also by many families of orthogonal polynomials) namely:
\begin{enumerate}
\item the $C_n^{(\lambda)}(x)$ are odd functions in odd degree and even functions in even degree, and
\item the function $f = \sgn$ is odd, and
\item the $C_n^{(\lambda)}(x)$ satisfy a Sturm-Liouville differential equation.
\end{enumerate}
All of this allows for explicit calculation \cite[p.~140]{kaber}.  Writing $\mathcal{F}_n^{(\lambda)}(x)$ for the expansion of $\sgn$ in the the polynomials $p_n = C_n^{(\lambda)}$, we get 
\[
\mathcal{F}_{2n+1}^{{(\lambda)}}(x) = \frac{1}{\lambda} \sum_{k=0}^n \frac{C_{2k}^{(\lambda+1)(0)}}{\| C_{2k}^{(\lambda+1)} \|} C_{2k+1}^{(\lambda)}(x).
\]
Differentiating, using the fact that that $\left( C_{2k+1}^{(\lambda)}(x) \right)' = 2\lambda C_{2k}^{(\lambda+1)}$,  and applying the Christoffel-Darboux summation formula  \cite[(3.2.3)]{szego}, we get
\[
\frac{\dd}{\dd x} \mathcal{F}_{2n+1}^{(\lambda)}(x) = d_{n}^{(\lambda)}  \, \frac{C_{2n+1}^{(\lambda+1)}(x)}{x},
\]
for an explicit constant $d_{n}^{(\lambda)}$. 

This last expression is especially convenient.  It shows that the critical points of $\mathcal{F}_{2n+1}^{(\lambda)}$ are given by the non-zero zeroes of a shifted Gegenbauer polynomial.  Evaluating $\mathcal{F}_{2n+1}^{(\lambda)}$ at its smallest positive critical point (say, $x_{n,+}$) is the approximation of the Gibbs phenomenon and the limit $\lim_{n \to \infty}\mathcal{F}^{(\lambda)}_{2n+1}(x_{n,+})$ exists and is exactly the Gibbs constant $\gamma$.  Appealing again to the the derivative expression above, write
\[
\mathcal{F}_{2n+1}^{(\lambda)}(x_{n,+}) = d_{n,\lambda} \int_0^{x_{n,+}} \frac{C_{2n+1}^{(\lambda+1)}(y)}{y}\, \dd y.
\]
It is this final expression that is shown to converge to $\gamma$, for all $\lambda$, as $n \to \infty$.  The argument is intricate and uses a very fine analysis of the zeroes of the Gegenbauer polynomials.

In the following sections we will apply this exact approach to the Generalized Laguerre polynomials.  However, several complications arise:
\begin{enumerate}
\item the GLP are not orthogonal when $x <0$, hence we consider the expansion of 
\[
S(x) = \begin{cases} 0 & \text{ if } 0\leq x <1 \\ 1 & \text{ if } x \geq 1; \end{cases}
\]
\item we have $L_n^{(\alpha)}(1) \ne 1/2$, unlike the analogous situation with the Gegenbauer polynomials and the $\sgn$ function;
\item the $L_n^{(\alpha)}(x)$ are neither even functions in even degree, nor odd functions in odd degree, and do not exhibit a similar symmetry about $x =1$.  
\item since $L_n^{(\alpha)}(1) \ne 1/2$ for all $n$ and $\alpha$ (unlike the situation where $C_{2k+1}^{(\lambda)}(0) =0$ for all $k$ and $\lambda$ , we must consider the entire jump $\Pi_n^{(\alpha)}(S)(x_{n,+}^{(\alpha)})  - \Pi_n^{(\alpha)}(S)(x_{n,-}^{(\alpha)})$ and show that this converges to $\gamma$;
\item the domain $\R_\geq$ on which that the polynomials $\lbrace {L}_n^{(\alpha)}\rbrace$ are orthogonal is not compact, so we will need an alternative argument to show that certain critical points of the expansion of $f$ in the GLP get arbitrarily close to $x=1$.  
\end{enumerate}

\begin{rmk}
Towards point (4) above, we make the conjecture that for all $\alpha$, $\lim_{n \to \infty} \Pi_n^{(\alpha)}(1) = 1/2$; see Section \ref{numerics} for details.  If true, then we still recover Theorem \ref{mainthm} but could deduce the Gibbs Phenomenon from either of the one-sided limits.
\end{rmk}




\section{Special Case: The Hermite Polynomials}

As a special case, we briefly work through the details for the Hermite polynomials.  It's possible we could derive the Gibbs constant as a corollary to our work on the GLP below, but because the Hermite polynomials involve two specializations ($\alpha = \pm 1/2$ and $x \to x^2$) and, due to the quadratic substitution, the domain of orthogonality is $(-\infty,\infty)$.   The Hermite polynomials are also simpler to compute with since they are odd functions in odd degree and even functions in even degree.  

We write $\Pi_M$ for the Hermite approximation of $\sgn$ of degree $M$. Therefore, $\Pi_M(x) = \sum_{j=0}^M \hat{s}_jH_j(x)$, where 
\begin{align*}
\hat{s}_j = \frac{\langle \sgn, H_j \rangle}{\| H_j \|^2} &= \frac{1}{\| H_j \|^2} \int_{-\infty}^\infty \sgn(x)\, H_j(x) \, e^{-x^2}\,{\rm d}x \\[5pt] &= \begin{cases} 0 & \text{ if $j$ even} \\ \displaystyle \frac{2}{\| H_j \|^2} \int_{0}^\infty  H_j(x) \, e^{-x^2}\,{\rm d}x & \text{ if $j$ odd.}
\end{cases}
\end{align*}
To evaluate the integral, use the Sturm-Liouville form of the differential equation satisfied by the Hermite polynomials:
\begin{align}
\left( e^{-x^2} \, H_j'(x) \right)' + e^{-x^2}\,2j\,H_j(x) = 0.
\end{align}
Using the fact that $H_j'(x) = 2j\,H_{j-1}(x)$, we integrate to obtain $\int_0^\infty  H_j(x)e^{-x^2}{\rm d}x = H_{j-1}(0)$ and conclude
\begin{align}
\hat{s}_j = 
\begin{cases} \displaystyle
\frac{2H_{j-1}(0)}{\|H_j\|^2} & \text{if $j$ odd, and } \\ 0 & \text{if $j$ even.} \end{cases}
\end{align}

In general, we have $\| H_k \|^2 = \sqrt{\pi}\,2^k\,k!$, whence $\|H_{2j+1}\|^2 = 2(2j+1) \,\|H_{2j}\|^2$.  Therefore, we can reindex and write the series expansion of $\sgn(x)$ in the Hermite basis as
\begin{align}
\Pi_{2N+1}(x) = \sum_{j=0}^N \left( \frac{1}{2j+1} \right) \, \frac{H_{2j}(0) H_{2j+1}(x)}{\|H_{2j}\|^2}. 
\end{align}

According to \cite[(5.5.9)]{szego}, the Christoffel-Darboux formula applied to the Hermite polynomials gives us
\[
\sum_{k=0}^n \left(2^k\,k!\right)^{-1} H_k(x)\,H_k(y) = \left(2^{n+1}\,n!\right)^{-1}\, \frac{H_{n+1}(x)H_n(y) - H_n(x)H_{n+1}(y)}{x-y},
\]
which we can use to get an expression for $\Pi_{2N+1}$ and its critical points.  We have 
\begin{align*}
\frac{\dd}{\dd x} \Pi_{2N+1}(x) &= \sum_{j=0}^N \left( \frac{1}{2j+1} \right) \, \frac{H_{2j}(0)}{\|H_{2j}\|^2}\, 2(2j+1)\,H_{2j}(x).
\end{align*}
Simplifying, reindexing, and using the formula $\| H_j \|^2 = \sqrt{\pi}\,2^j\,j!$, we can rewrite the sum above into a form amenable to the Christoffel-Darboux expression (using the fact that $H_j(0)=0$ when $j$ is odd):
\begin{align*}
\frac{\dd}{\dd x} \Pi_{2N+1}(x) &= \frac{2}{\sqrt{\pi}}\sum_{j=0}^{2N} \left( 2^j\,j!\right)^{-1} H_{j}(0)\,H_j(x) = \frac{2\,H_{2N+1}(x) \, H_{2N}(0)}{x\,\sqrt{\pi} 2^{2N+1} \, (2N+1)!}.
\end{align*}
We can further simplify this expression using the well-known value of even-degree Hermite polynomials at $x=0$: $H_{2m}(0) = (-1)^m\,(2m)!/m!$.  Substituting this into the expression above yields
\begin{align} \label{diff_form}
\frac{\dd}{\dd x} \Pi_{2N+1}(x) = \frac{(-1)^N}{2^{2N}\,N! \sqrt{\pi}}\, \frac{H_{2N+1}(x)}{x},
\end{align}
and hence the following integral form for $\Pi_{2N+1}(x)$:
\begin{align}
\Pi_{2N+1}(x) = \frac{(-1)^N}{2^{2N}\,N! \sqrt{\pi}}\, \int_0^x\,\frac{H_{2N+1}(y)}{y}\,\dd y.
\end{align}
We can now use these forms to analyze the critical points of $\Pi_{2N+1}$.

\subsection{Critical Points}

Let $x_{N,+}$ be the smallest positive critical point of $\Pi_{2N+1}$.  In order to understand the Gibbs phenomenon for the Hermite polynomials, we are interested in $\lim_{N \to \infty} \Pi_{2N+1}(x_+)$.  By (\ref{diff_form}), the critical points of $\Pi_{2N+1}$ are precisely the non-zero zeroes of $H_{2N+1}$.  

Now we appeal to \cite[(6.23.31)]{szego} on estimates for the least positive zero $x_{1n}$ of $H_n(x)$:
\[
\frac{\pi}{\sqrt{2n+1}} < x_{1n} < \frac{\pi}{\sqrt{2n+1}} \underbrace{\left\{ \frac{1}{2} + \frac{1}{2} \left[1 - \left( \frac{2\pi}{2n+1} \right)^2 \right]^{1/2}\right\}^{-1/2}}_{\to 1^+ \text{ as } N \to \infty} 
\]
Thus, for large $N$ we have the approximation 
\begin{align*}
x_{N,+} \sim \frac{\pi}{\sqrt{4N+3}}.
\end{align*}

\subsection{Asymptotic Analysis} For large $n$ we have the well-known estimate \cite[(8.22.6,7)]{szego}:
\[
e^{-x^2/2} \, H_n(x) \, \sim \, \frac{2^n}{\sqrt{\pi}} \, \Gamma \left(\frac{n+1}{2} \right) \, \cos\left( x \sqrt{2n} - \frac{n\pi}{2} \right). 
\]
Applying this to $H_{2N+1}(x)$, simplifying the Gamma factor, and rewriting the cosine function as a sine function gives us
\[
H_{2N+1}(x) \sim \frac{1}{\sqrt{\pi}}\, (-1)^N\, N!\, 2^{2N+1}\,e^{x^2/2}\,\sin\left( x\sqrt{4N+2} \right), 
\]
and hence
\begin{align*}
\Pi_{2N+1}(x_{N,+}) &\sim \frac{2}{\pi} \, \int_0^{\pi/\sqrt{4N+3}}\, \frac{e^{y^2/2}\,\sin \left(y \sqrt{4N+2} \right)}{y}\, \dd y \\
&= \frac{2}{\pi} \, \int_0^{\pi \, \sqrt{\frac{4N+2}{4N+3}}}\, \frac{\exp \left( \frac{u^2}{8N+4} \right)\,\sin \left(u \right)}{u}\, \dd u \\
&\to \frac{2}{\pi} \int_0^\pi \frac{\sin u}{u} \, \dd u = \gamma,
\end{align*}
as $N \to \infty$.  This recovers the classical Gibbs constant in the context of the Hermite polynomials.  In the next section will apply similar methods to recover the Gibbs constant for the GLP, though with some necessary modifications. 

\section{The Generalized Laguerre Polynomials}

We now turn to the main business of the paper and show that the Gibbs constant $\gamma$ appears in the approximation by Laguerre polynomials near a jump discontinuity.  Because the interval on which the GLP are orthogonal is $[0,\infty)$, we must use a different test function than $\sgn$.  Set
\[
S(x) \ddef \begin{cases} 0 &\text{ if } 0 \leq x <1 \\ 1 & \text{ if } x \geq 1. \end{cases}
\]

Proceeding similarly as above, write
\[
\Pi_n^{(\alpha)}(S)(x) = \sum_{j=0}^n \hat{s}_j^{(\alpha)} L_j^{(\alpha)}(x),
\]
where
\[
\hat{s}_j^{(\alpha)} = \frac{\langle S,L_j^{(\alpha)} \rangle}{\| L_j^{(\alpha)} \|^2} = \frac{1}{\| L_j^{(\alpha)} \|^2} \,\int_1^\infty L_j^{(\alpha)}(x)\, e^{-x}\, x^\alpha\, \dd x.
\]
We calculate the integral with help from the Sturm-Liouville differential equation
\[
\left(x^{\alpha+1}\, e^{-x} \, L_n^{(\alpha)}(x)'\right)' = -n\, x^\alpha \, e^{-x} \, L_n^{(\alpha)}(x),
\]
giving us for $j\geq 1$
\begin{align*}
\hat{s}_j^{(\alpha)} &= \frac{1}{\| L_j^{(\alpha)} \|^2}  \, \frac{-1}{je} \, \frac{\dd}{\dd x} \bigg|_{x=1} L_j^{(\alpha)}(x) \\
&= -\frac{1}{e}\, \frac{L_{j-1}^{(\alpha+1)}(1)}{\| L_{j-1}^{(\alpha + 1)} \|^2},
\end{align*}
where the last equality follows from manipulating well-known identities for the GLP; when $j=0$ we calculate directly that
\[
s_0^{(\alpha)} = \frac{\int_1^\infty x^\alpha e^{-x}\,\dd x}{\Gamma(\alpha +1)} = \frac{\Gamma(\alpha+1,1)}{\Gamma(\alpha + 1)},
\]
where $\Gamma(\alpha+1,1)$ is the incomplete Gamma function.  Putting this all together gives
\begin{align} \label{sum}
\Pi_n^{(\alpha)}(S)(x) =\frac{\Gamma(\alpha+1,1)}{\Gamma(\alpha + 1)}  -\frac{1}{e} \sum_{j=1}^n \frac{L_{j-1}^{(\alpha+1)}(1)}{\| L_{j-1}^{(\alpha + 1)} \|^2} L_{j}^{(\alpha)}(x).
\end{align}

We can simplify this expression by first taking the derivative
\begin{align*}
\frac{\dd}{\dd x} \, \Pi_n^{(\alpha)}(S)(x) &=-\frac{1}{e} \sum_{j=1}^n \frac{L_{j-1}^{(\alpha+1)}(1)}{\| L_{j-1}^{(\alpha + 1)} \|^2} \, \frac{\dd}{\dd x} \, \left(L_{j}^{(\alpha)}(x) \right) \\
&=\frac{1}{e} \sum_{j=0}^{n-1} \frac{L_{j}^{(\alpha+1)}(1) \,L_{j}^{(\alpha+1)}(x)}{\| L_{j}^{(\alpha + 1)} \|^2}, 
\end{align*}
and then using the Christoffel-Darboux summatory identity 
\[
\sum_{j=0}^n \frac{p_j(x)\,p_j(y)}{\| p_j \|^2} = \frac{k_n}{k_{n+1} \|p_{n+1}\|^2} \, \frac{p_{n+1}(x)p_n(y) - p_{n}(x)p_{n+1}(y)}{x-y}
\]
(here $k_n$ is the leading coefficient of $p_n$) to obtain
\begin{align} \label{der_form}
\frac{\dd}{\dd x} \, \Pi_n^{(\alpha)}(S)(x) = \frac{-n!}{e\, \Gamma(n+\alpha+1)} \, \left( \frac{L_{n-1}^{(\alpha+1)}(1)\, 
L_n^{(\alpha+1)}(x) - L_{n-1}^{(\alpha+1)}(x)\,L_n^{(\alpha+1)}(1)}{x-1} \right),
\end{align}
whence
\begin{align} \label{int_form}
\Pi_n^{(\alpha)}(S)(x) =  \frac{n!}{e\, \Gamma(n+\alpha+1)} \, \int_1^x \,  \frac{-L_{n-1}^{(\alpha+1)}(1)\, 
L_n^{(\alpha+1)}(y) + L_{n-1}^{(\alpha+1)}(y)\,L_n^{(\alpha+1)}(1)}{y-1} \, \dd y
\end{align}
for $x>1$. For convenience we set
\[
d_n^{(\alpha)} \ddef \frac{n!}{e\, \Gamma(n+\alpha+1)}. 
\]

The numerator of (\ref{der_form}) is a polynomial (divisible by $x-1$), so it make sense to speak of its smallest zero $>1$ and largest zero $<$ 1; we denote these zeroes by $x_{n,+}^{(\alpha)}$ and $x_{n,-}^{(\alpha)}$, respectively.  Over the next several subsections we perform an asymptotic analysis on the expression in (\ref{int_form}) in order to show that 
\[
\lim_{n \to  \infty} \Pi_n^{(\alpha)}(S)(x_{n,+}^{(\alpha)}) - \Pi_n^{(\alpha)}(S)(x_{n,-}^{(\alpha)}) = \gamma.
\]

\subsection{Initial Asymptotic Analysis}

In order to convert the integral (\ref{int_form}) into one resembling the Gibbs constant, we review the well-known asymptotic relations between the GLP and the sine function.  We start with an asymptotic expression for $d_n^{(\alpha)}$. 

\begin{lem} \label{leading}
We have $d_n^{(\alpha)} \sim 1/(e\,n^\alpha)$ for large $n$ and all $\alpha$.
\end{lem}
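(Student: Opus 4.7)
\medskip

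\noindent\textbf{Proof proposal.}  The claim is a straightforward consequence of Stirling's formula, applied to the ratio of Gamma functions hidden in the definition
\[
d_n^{(\alpha)} = \frac{n!}{e\,\Gamma(n+\alpha+1)} = \frac{1}{e} \cdot \frac{\Gamma(n+1)}{\Gamma(n+\alpha+1)}.
\]
So the task reduces to establishing the standard asymptotic
\[
\frac{\Gamma(n+1)}{\Gamma(n+\alpha+1)} \sim \frac{1}{n^\alpha} \qquad (n \to \infty),
\]
for every real $\alpha > -1$.

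The plan is to apply Stirling's formula $\Gamma(z+1) \sim \sqrt{2\pi z}\,z^z\,e^{-z}$ to both the numerator and the denominator. Writing $\Gamma(n+\alpha+1) \sim \sqrt{2\pi(n+\alpha)}\,(n+\alpha)^{n+\alpha}\,e^{-(n+\alpha)}$ and taking the ratio, I group the resulting factors as
\[
\frac{\Gamma(n+1)}{\Gamma(n+\alpha+1)} \sim \underbrace{\sqrt{\frac{n}{n+\alpha}}}_{\to\,1} \cdot \underbrace{\left(1 - \frac{\alpha}{n+\alpha}\right)^{n}}_{\to\,e^{-\alpha}} \cdot \underbrace{(n+\alpha)^{-\alpha}\,e^{\alpha}}_{\sim\,n^{-\alpha}e^{\alpha}}.
\]
The three factors combine to give $n^{-\alpha}$, as required; multiplying by $1/e$ yields the statement of the lemma.

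There is no real obstacle here. The only minor bookkeeping is the verification that $(n/(n+\alpha))^n \to e^{-\alpha}$, which follows from $n \log(1-\alpha/(n+\alpha)) = -\alpha + O(1/n)$, and the replacement of $(n+\alpha)^{-\alpha}$ by $n^{-\alpha}$, which is justified by $(1 + \alpha/n)^{-\alpha} \to 1$. Alternatively, one could invoke the classical ratio formula $\Gamma(z+a)/\Gamma(z+b) \sim z^{a-b}$ as $z \to \infty$ (with $a=0$, $b=\alpha$, $z = n+1$) as a black box and skip the Stirling computation entirely.
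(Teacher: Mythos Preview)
Your proof is correct and follows essentially the same route as the paper: both compute the ratio $\Gamma(n+1)/\Gamma(n+\alpha+1)$ via Stirling's approximation and track the limits of the resulting factors. The paper uses the slightly weaker form $\log\Gamma(z+1)\sim z\log z - z$ (dropping the $\sqrt{2\pi z}$ term, which cancels anyway), whereas you keep the full Stirling formula and organize the factors more transparently; the substance is identical.
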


\begin{proof}
We use a standard approximation to estimate the Gamma function.  Fixing $\alpha \in \R$, we take $n \to \infty$ in the following analysis:
\begin{align*}
\frac{n!}{\Gamma(n+\alpha +1)} = \frac{\Gamma(n+1)}{\Gamma(n+\alpha+1)} &\sim \frac{\exp(n\log n - n)}{\exp((n+\alpha)\log (n+\alpha) - (n+\alpha))}\\
&= \frac{1}{\exp \left( (\log(n+\alpha))^{n+\alpha} - (\log n)^n + n - n -\alpha \right)} \\
&= \frac{\exp(\alpha)}{\exp \left( \log ((n+\alpha)/n))^{n} \cdot (n+\alpha)^\alpha \right)}\\
&=\frac{\exp(\alpha)}{ \left( \frac{n+\alpha}{n} \right)^n \cdot n^\alpha \cdot (1 + \frac{\alpha}{n})^\alpha} \\
&\to \frac{\exp(\alpha)}{\exp(\alpha) \cdot n^\alpha} = 1/n^\alpha,
\end{align*}
and the result follows.
\end{proof}

According to \cite[(8.22.1)]{szego}, we have 
\begin{align} \label{glpasymp}
L_n^{(\alpha)}(x) \sim \frac{1}{\sqrt{\pi}}  \, \frac{n^{\alpha/2 - 1/4}}{x^{\alpha/2 + 1/4}} \, e^{x/2} \, \sin\left(2\sqrt{nx} - \frac{\alpha \pi}{2} + \frac{\pi}{4} \right).
\end{align}
We make the following definition for notational convenience:
\[
\A \ddef \frac{\pi}{2} \left(\alpha - \frac{1}{2} \right).
\]
The expressions (\ref{glpasymp}) and (\ref{leading}) allow us to replace (\ref{der_form}) with 
\begin{align*}
&\frac{\dd}{\dd x}\,\Pi_n^{(\alpha)}(S)(x) \sim\\
&\frac{n^{(1-2\alpha)/4} \, (n-1)^{(1+2\alpha)/4} \, \exp((x-1)/2)}{\pi \, x^{(2\alpha + 3)/4}\,(x-1)} \ \times \\
&\left( 
\sin \left( 2\sqrt{n} - \A\right)\,\sin \left( 2\sqrt{(n-1)x} - \A \right)
-
\sin \left( 2\sqrt{n-1} - \A\right)\,\sin \left( 2\sqrt{nx} - \A \right)
\right).
\end{align*}

\subsection{Trigonometric Calculus} Using elementary properties of the sine and cosine functions we can turn the expression
\begin{align} \label{initial_trig}
\sin \left( 2\sqrt{n} - \A\right)\,\sin \left( 2\sqrt{(n-1)x} - \A \right)
-
\sin \left( 2\sqrt{n-1} - \A\right)\,\sin \left( 2\sqrt{nx} - \A \right)
\end{align}
into a more useful form for analysis. We apply the formul\ae
\begin{align*}
\sin A \, \sin B &= \frac{1}{2} \left( \cos(A-B)  - \cos(A+B) \right),\text{ and}\\
\cos A + \cos B &= 2 \cos((A-B)/2)\,\cos((A+B)/2)
\end{align*}
repeatedly to replace (\ref{initial_trig}) with the equivalent expression
\begin{align*}
&\cos \left( (\sqrt{n} + \sqrt{n-1})\sqrt{x} + (\sqrt{n-1} - \sqrt{n}) -\A \right)\,\cos \left( (\sqrt{n} - \sqrt{n-1})\sqrt{x} + (\sqrt{n} + \sqrt{n-1}) -\A \right) \\
&-\\
&\cos \left( (\sqrt{n} + \sqrt{n-1})\sqrt{x} + (\sqrt{n} - \sqrt{n-1}) -\A \right)\,\cos \left( (\sqrt{n} - \sqrt{n-1})\sqrt{x} - (\sqrt{n} + \sqrt{n-1}) -\A \right).
\end{align*}

Now we make two standard approximations:
\begin{align*}
\sqrt{n} \pm \sqrt{n-1} &= (2\sqrt{n})^{\pm1} + O(n^{-3/2})
\end{align*}
and apply them to the above expression to get
\begin{align*}
&\cos \left(  ( 2\sqrt{nx} - \A) - \frac{1}{2\sqrt{n}}\right) \cos \left(  ( 2\sqrt{n} - \A) + \frac{\sqrt{x}}{2\sqrt{n}}\right)\\
&- \\
&\cos \left(  ( 2\sqrt{nx} - \A) + \frac{1}{2\sqrt{n}}\right) \cos \left(  ( 2\sqrt{n} - \A) - \frac{\sqrt{x}}{2\sqrt{n}}\right).
\end{align*}
Next, we perform two operations: first we apply the sum of angles formula for each cosine term and second we make the substitutions 
\begin{align*}
&\cos \left( \frac{1}{2\sqrt{n}}\right) = 1 \text{ and } \cos \left( \frac{x}{2\sqrt{n}}\right) = 1, \text{ and}\\
&\sin \left( \frac{1}{2\sqrt{n}}\right) = \frac{1}{2\sqrt{n}} \text{ and } \sin \left( \frac{x}{2\sqrt{n}}\right) = \frac{\sqrt{x}}{2\sqrt{n}};
\end{align*}
the former is $O(1/n)$ and the latter is $O(1/n^{3/2})$.   These two operations allow us to replace the previous expression with
\begin{align*}
&\left(\cos \left( 2\sqrt{nx} - \A\right) + \frac{\sin \left( 2\sqrt{nx} - \A \right)}{2\sqrt{n}} \right)
\left(\cos \left( 2\sqrt{n} - \A\right) - \frac{\sqrt{x}\sin \left( 2\sqrt{n} - \A \right)}{2\sqrt{n}} \right) \\
&- \\
&\left(\cos \left( 2\sqrt{nx} - \A\right) - \frac{\sin \left( 2\sqrt{nx} - \A \right)}{2\sqrt{n}} \right)
\left(\cos \left( 2\sqrt{n} - \A\right) + \frac{\sqrt{x}\sin \left( 2\sqrt{n} - \A \right)}{2\sqrt{n}} \right).
\end{align*}

Multiplying these expressions and simplifying yields
\[
\frac{1}{\sqrt{n}} \left( \sin(2\sqrt{nx} - \A) \cos(2\sqrt{n}-\A) - \sqrt{x} \cos(2\sqrt{nx} - \A)\sin(2\sqrt{n} - \A)   \right),
\]
which is equal to 
\begin{align*}
&\frac{1}{\sqrt{n}} ( \sin(2\sqrt{nx} - \A) \cos(2\sqrt{n}-\A) - \cos(2\sqrt{nx} - \A)\sin(2\sqrt{n} - \A)  \\
& -(\sqrt{x}-1) \cos(2\sqrt{nx} - \A)\sin(2\sqrt{n} - \A)),
\end{align*}
which is furthermore equal to
\begin{align} \label{simplification}
\frac{1}{\sqrt{n}} \left( \sin(2\sqrt{n}(\sqrt{x} -1))-(\sqrt{x}-1) \cos(2\sqrt{nx} - \A)\sin(2\sqrt{n} - \A)\right).
\end{align}
Multiplying (\ref{simplification}) by 
\[
\frac{n^{(1-2\alpha)/4} \, (n-1)^{(1+2\alpha)/4} \, \exp((x+1)/2)}{\pi \, x^{(2\alpha + 3)/4}\,(x-1)},
\]
finally yields 
\begin{align} \label{final_with_x}
\frac{\dd}{\dd x}\,\Pi_n^{(\alpha)}(S)(x) &\sim \frac{\left(1-\frac{1}{n}\right)^{(1+2\alpha)/4} e^{\frac{x-1}{2}}}{\pi \, x^{(2\alpha + 3)/4}\,(x-1)} \times \\
&\left( \sin(2\sqrt{n}(\sqrt{x} -1))-(\sqrt{x}-1) \cos(2\sqrt{nx} - \A)\sin(2\sqrt{n} - \A)\right).
\end{align}
This is the expression that we will use for the final portion of our analysis.

\subsection{Completion of the Proof} Using the substitution $u = 2\sqrt{n}(\sqrt{x} -1)$, we compute:
\begin{align*}
\frac{\dd x}{x-1} &= 2 \, \left(\frac{u+2\sqrt{n}}{u(u+4\sqrt{n})}\right) \to \frac{1}{u},\text{ and} \\
\frac{x-1}{2} &= \frac{u(u+4\sqrt{n})}{8n} \to 0, \text{ and}\\
x^{(2\alpha + 3)/4} &= \left(\frac{(u+2\sqrt{n})^2}{4n} \right)^{(2\alpha+3)/4} \to 1.
\end{align*}
We also clearly have $(1-1/n)^{(2\alpha+3)/4} \to 1$ as $n \to \infty$, for all $\alpha$. 

Set $u_{n,\pm}^{(\alpha)} \ddef 2\sqrt{n}(\sqrt{x_{n,\pm}^{(\alpha)}} -1)$. As $x_{n,\pm}^{(\alpha)}$ approximate the first roots of $\sin (2\sqrt{n}(\sqrt{x} -1))$ before and after $x=1$, the $u_{n,\pm}^{(\alpha)}$ approximate the first roots of $\sin u$ before and after $u=0$, hence $u_{n,\pm}^{(\alpha)} \to \pm \pi$ for any fixed $\alpha$.  

Therefore, taking $n \to \infty$, applying the above approximation, and integrating (\ref{final_with_x}) yields
\begin{align*}
\Pi_n^{(\alpha)}(x_{n,+}^{(\alpha)}) - \Pi_n^{(\alpha)}(x_{n,-}^{(\alpha)}) &\to \frac{1}{\pi}\,\int_{-\pi}^\pi \frac{\sin u}{u}\,\dd u \, - \, 
\frac{\sin(2\sqrt{n} - \A)}{\pi\,2\sqrt{n}} \int_{-\pi}^{\pi} \cos(u+2\sqrt{n}-\A)\,\dd u \\
&\to \frac{1}{\pi}\,\int_{-\pi}^\pi \frac{\sin u}{u}\,\dd u  = \gamma,
\end{align*}
completing the proof of Theorem \ref{mainthm}.

\section{Numerical Justification} \label{numerics}

\subsection{Critical Points}  In the course of proving Theorem \ref{mainthm}, we made many asymptotic substitutions.  In this final section we choose a particularly convenient value of $\alpha$ and give some data illustrating the slowness of the convergence of the overshoot to $\gamma$. 

Setting $\alpha = -1/2$ gives us the exact formula 
\begin{align}
\frac{\dd}{\dd x} \Pi_n^{(-1/2)}(S)(x) = d_n^{(-1/2)} \, \left( \frac{-L_{n-1}^{(1/2)}(1)\, 
L_n^{(1/2)}(x) + L_{n-1}^{(1/2)}(x)\,L_n^{(1/2)}(1)}{x-1} \right),
\end{align}
hence 
\begin{align} 
&\Pi_n^{(-1/2)}(x_{n,+}^{(-1/2)}) - \Pi_n^{(-1/2)}(x_{n,-}^{(-1/2)})  \\
&= \frac{n!}{e\, \Gamma(n+1/2)} \, \int_{x_{n,-}^{(-1/2)}}^{x_{n,+}^{(-1/2)}} \frac{-L_{n-1}^{(1/2)}(1)\, 
L_n^{(1/2)}(x) + L_{n-1}^{(1/2)}(x)\,L_n^{(1/2)}(1)}{x-1} \, \dd x. \label{special_integral}
\end{align}
The choice of $\alpha = -1/2$ means that the asymptotic (\ref{glpasymp}) simplifies to 
\[
L_m^{(1/2)}(x)  \sim \frac{1}{\sqrt{\pi x}}  \, e^{x/2} \, \sin\left(2\sqrt{nx}\right)
\]
and $d_{n}^{(-1/2)} \sim \sqrt{n}$. However, for the purposes of numerical investigation, we will not apply these formul\ae, instead choosing to work with the polynomials explicitly.

Just as in the case of general $\alpha$, the numbers $x_{n,+}^{(-1/2)}$ and ${x_{n,-}^{(-1/2)}}$ are roots of a polynomial, hence can be estimated to high precision, as can the Gamma function $\Gamma (n+1/2)$.  And the integrand of (\ref{special_integral}) is a polynomial, hence is amenable to integration with a computer algebra package.

Using the computer algebra system \href{https://pari.math.u-bordeaux.fr/}{\texttt{gp-pari}} \cite{pari} we first define the integrand
\begin{center}
\texttt{LL(n,a,x) = (L(n-1,a+1,x)*L(n,a+1,1)-L(n-1,a+1,1)*L(n,a+1,x))/(x-1)}
\end{center}
and then the antiderivative of this polynomial:
\begin{center}
\texttt{I(n,a,x) = intformal(LL(n,a,x))}
\end{center}
which is, again, another polynomial. Up to this point, there are no simplifications or approximations.

Now we use two built-in routines for estimation: the \texttt{polroots} command, which returns a vector of approximations of a polynomial's roots, and the \texttt{gamma} command, which approximates the Gamma function.  Now we set
\begin{itemize}
\item numerical precision to 300 significant digits using \texttt{$\mathtt{\backslash}$p 300}
\item $n=200$
\end{itemize}
At this level of precision, using the \texttt{polroots} command we estimate our roots to be
\begin{align*}
x_{200,+}^{(-1/2)} \sim~ &1.23468886080318246205175640076637647798323949845707164146855\\
&181967104846688402531629527949568958184725501099221878799937\\
&065607066281702792065904979826237243575329891028544691448117\\
&4201139582708536428948881791368482448438408329248865333917731\\
&41854062545324039789475636307588151666472440252073636724439,\text{ and} \\[10pt]
x_{200,-}^{(-1/2)}\sim~&0.7900542198210110735737933107310435096052638050001253395683377\\
&0261682371463909575893541126569621731715791818914620650296428\\
&3864217580809768632697488288862041562201358460410751672195826\\
&5866620088384634191041289583048943312518954134983229273045299\\
&31039866409623037748110975121280531459368332355928748953.
\end{align*}
Now we evaluate \texttt{LL(n,-1/2,x)} at each of these numbers, take the difference, and multiply by $200!/e\,\Gamma(200.5)$ which yields the following approximation of the overshoot (truncated to show the slowness of the convergence):
\[
\gamma \sim 1.1808\dots,
\]
which is within 0.002 of the exact value of $\gamma$.  We also check that via the substitution 
\[
u = 2\sqrt{n}(\sqrt{x} -1)
\]
$x_{200,+}^{(-1/2)}$ and $x_{200,-}^{(-1/2)}$ are mapped to (again, truncated to show the slowness of convergence)
\[
u_{200,+}^{(-1/2)} \sim 3.1442 \dots,\text{ and } u_{200,-}^{(-1/2)} \sim -3.1438\dots,
\]
respectively, which are within 0.003 of $\pi$ and $-\pi$, respectively. 

\subsection{Convergence at $x=1$} In the case of the Hermite polynomials which are even functions in even degree and odd functions in odd degree, we were in the agreeable situation where the function $\sgn$ is odd, with symmetry about the origin, and the odd-degree Hermite polynomials all pass through the origin.  This is not the case with the GLP; it is routine to check that $L_n^{(\alpha)}(1) \ne 1/2$ for general $\alpha$ and all $n$.  Nonetheless, we make the following conjecture.

\begin{conj} \label{conj}
For all $\alpha > -1$ we have 
\[
\lim_{n \to \infty} \Pi_n^{(\alpha)}(1) = 1/2.
\]
\end{conj}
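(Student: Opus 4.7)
The plan is to combine the integral formula (\ref{int_form}) developed for Theorem \ref{mainthm} with an independent pointwise convergence result for the Laguerre expansion at a point where $S$ is continuous. For any fixed $\delta > 0$, I would write
\[
\Pi_n^{(\alpha)}(S)(1) = \Pi_n^{(\alpha)}(S)(1+\delta) - \int_1^{1+\delta} \frac{\dd}{\dd x}\,\Pi_n^{(\alpha)}(S)(x)\, \dd x,
\]
so that it suffices to establish (i) $\Pi_n^{(\alpha)}(S)(1+\delta) \to 1$ as $n \to \infty$, and (ii) the integral tends to $1/2$.

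For claim (ii), I would extend the asymptotic argument leading to (\ref{final_with_x}). The substitution $u = 2\sqrt{n}(\sqrt{x}-1)$ sends $[1, 1+\delta]$ to $[0, U_n(\delta)]$ with $U_n(\delta) \to \infty$, but the simplifications $\dd x/(x-1) \to \dd u/u$ and $e^{(x-1)/2}, x^{(2\alpha+3)/4} \to 1$ used in the proof of Theorem \ref{mainthm} rely on $x \to 1$, which now holds only on a shrinking neighborhood. I would therefore split the integration at $x = 1+n^{-1/4}$: on $[1, 1+n^{-1/4}]$ the substitution yields $\tfrac{1}{\pi}\int_0^{U_n(n^{-1/4})} \frac{\sin u}{u}\,\dd u \to \tfrac{1}{2}$ since $U_n(n^{-1/4}) \sim n^{1/4} \to \infty$; on $[1+n^{-1/4}, 1+\delta]$ the integrand has amplitude $O(n^{1/4})$ but oscillates at frequency $\sim \sqrt{n}$, so a single integration by parts yields a vanishing contribution of order $O(n^{-1/4})$. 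The subtraction term $(\sqrt{x}-1)\cos(2\sqrt{nx}-\A)\sin(2\sqrt{n}-\A)$ in (\ref{final_with_x}) similarly contributes $O(1/\sqrt{n})$ after one integration by parts.

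Claim (i) is the main obstacle: it amounts to pointwise convergence of the Laguerre expansion of $S$ at a point of continuity. The most efficient route is to invoke a classical pointwise convergence theorem for Laguerre series applied to functions of bounded variation (a Dirichlet--Jordan type result in the spirit of Szeg\H{o}~\cite{szego}), which gives $\Pi_n^{(\alpha)}(S)(1+\delta) \to S(1+\delta) = 1$. A self-contained alternative would analyze the partial sum directly via $\Pi_n^{(\alpha)}(S)(1+\delta) = \int_1^\infty K_n(1+\delta, y)\, y^\alpha e^{-y}\,\dd y$, where $K_n$ is the Christoffel--Darboux kernel, using the Plancherel--Rotach asymptotics for $L_n^{(\alpha)}$; the principal difficulty there is controlling the tail contribution from $y \gtrsim 4n$, where (\ref{glpasymp}) no longer applies.

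Combining (i) and (ii) yields $\lim_{n\to\infty} \Pi_n^{(\alpha)}(S)(1) = 1 - \tfrac{1}{2} = \tfrac{1}{2}$, which is the conjecture. A symmetric argument using $1 - \delta$ in place of $1 + \delta$ offers a consistency check: there $\Pi_n^{(\alpha)}(S)(1-\delta) \to 0$ and the corresponding integral tends to $-\tfrac{1}{2}$, again giving $\tfrac{1}{2}$.
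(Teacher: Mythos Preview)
The paper does not prove this statement: it is presented as Conjecture~\ref{conj}, supported only by numerical tables for $\alpha=0,1,2$ and a reformulation via Carlitz's product formula that reduces the $\alpha=0$ case to a combinatorial identity. Your proposal therefore goes strictly beyond what the paper attempts.

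Your strategy is reasonable and, if made rigorous, would settle the conjecture. The decomposition $\Pi_n^{(\alpha)}(S)(1) = \Pi_n^{(\alpha)}(S)(1+\delta) - \int_1^{1+\delta} (\Pi_n^{(\alpha)}(S))'(x)\,\dd x$ is elementary, and your treatment of claim~(ii) is the natural extension of the paper's own asymptotic calculation: the near-$1$ piece gives $\tfrac{1}{\pi}\int_0^\infty \tfrac{\sin u}{u}\,\dd u = \tfrac{1}{2}$, and your integration-by-parts estimate on $[1+n^{-1/4},1+\delta]$ is correct in outline (the boundary term is $O(n^{-1/4})$ and the derivative of $1/(x-1)$ integrates to $O(n^{-1/4})$ as well). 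For claim~(i), the equiconvergence theorem for Laguerre series in Szeg\H{o}~\cite[Thm.~9.1.5]{szego} does exactly what you need at a continuity point of a piecewise-smooth $L^2_{w^{(\alpha)}}$ function, so this is not a genuine obstacle. Compared with the paper's combinatorial reformulation, your route is analytic and self-contained, reusing machinery already developed for Theorem~\ref{mainthm}; the paper's route would instead require proving a nontrivial hypergeometric-type summation.

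One caution: the paper's derivation of (\ref{final_with_x}) treats the error in (\ref{glpasymp}) informally. For your claim~(ii) to be rigorous you need the asymptotic with its $O(n^{\alpha/2-3/4})$ remainder uniformly on $[1,1+\delta]$ (this is available in \cite[\S8.22]{szego}), and you must check that the remainder contributes $o(1)$ after integration against $1/(x-1)$ over your two subintervals; this is routine but should be stated explicitly.
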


\noindent According to \cite[(4)]{carlitz}, we have
\[
L_n^{(\alpha)}(x)L_{n-1}^{(\alpha+1)}(x) = \frac{\Gamma(1+\alpha + n)}{2^{2n}n!} \, \sum_{r=0}^{n} \frac{(2r)!(2n-2r)!}{r!((n-r)!)^2\Gamma(1+\alpha + r)} L_{2r-1}^{(2\alpha+1)}(2x).
\]
Evaluating at $x=1$, we can combine this with (\ref{sum}) to get
\begin{align*}
\Pi_n^{(\alpha)}(S)(1) &=\frac{\Gamma(\alpha+1,1)}{\Gamma(\alpha + 1)}  -\frac{1}{e} \sum_{j=1}^n \frac{L_{j-1}^{(\alpha+1)}(1)}{\| L_{j-1}^{(\alpha + 1)} \|^2} L_{j}^{(\alpha)}(1) \\
&=\frac{\Gamma(\alpha+1,1)}{\Gamma(\alpha + 1)}  -\frac{1}{e} \sum_{j=1}^n \frac{\Gamma(1+\alpha + j)}{2^{2j}j! \| L_{j-1}^{(\alpha + 1)} \|^2} \sum_{r=0}^{j} \frac{(2r)!(2j-2r)!}{r!((j-r)!)^2\Gamma(1+\alpha + r)} L_{2r-1}^{(2\alpha+1)}(2).
\end{align*}
We can evaluate this expression for a few values of $\alpha$ and $n$:
\begin{center}
\begin{tabular}{|r|c|l|}
\hline
$n$ & $\alpha$ & $\Pi_n^{(\alpha)}(1)$ \\
\hline
100 & 0 & $0.4973032559\dots$ \\
1000& 0 & $0.4994002364\dots$ \\
\hline
100 & 1 & $0.5039460855\dots$ \\
1000& 1 & $0.5004991579\dots$\\
\hline
100 & 2 & $0.5256199161\dots$ \\
1000& 2 & $0.5096998025\dots$ \\
\hline

\end{tabular}
\end{center}
Since $\Pi_n^{(\alpha)}(x)$ is continuous in $\alpha$ and the Gamma function smoothly interpolates the factorial function, it would suffice to prove Conjecture \ref{conj} for integral values of $\alpha$. 

To finish the paper, we highlight one case in particular.  If $\alpha =0$, we have $\Gamma(1,1)/\Gamma(1) = 1/e$.  Using the identity $\| L_n^{(\alpha)} \|^2 = \Gamma(n+\alpha+1)/n!$, we can write
\[
\Pi_n^{(0)}(1) = \frac{1}{e}\left(1 - \sum_{j=1}^n \frac{1}{2^{2j}\,j} \sum_{r=0}^{j} \binom{2r}{r}\binom{2j-2r}{j-r} L_{2r-1}^{(1)}(2)\right).
\]
Noting that when $r=0$ we have $L_{2r-1}^{(1)}(2)=0$, our conjecture in the case $\alpha=0$ reduces to
\[
\sum_{j=1}^\infty \sum_{r=1}^j \sum_{k=0}^{2r-1} \frac{1}{2^{2j}\ j} \binom{2r}{r}\binom{2j-2r}{j-r} \binom{2r}{2r-1-k} \frac{(-2)^k}{k!} = 1 - \frac{e}{2}.
\]

\end{document}